\def\endthebibliography{%
  \def\@noitemerr{\@latex@warning{Empty `thebibliography' environment}}%
  \endlist
}
\DeclareMathOperator*{\argmax}{arg\,max}
\DeclareMathOperator*{\argmin}{arg\,min}
\newtheorem{lemma}{Lemma}
\newcommand*\diff{\mathop{}\!\mathrm{d}}
\def\BibTeX{{\rm B\kern-.05em{\sc i\kern-.025em b}\kern-.08em
    T\kern-.1667em\lower.7ex\hbox{E}\kern-.125emX}}
\algnewcommand{\Inputs}[1]{%
  \State \textbf{Inputs:}
  \Statex \hspace*{\algorithmicindent}\parbox[t]{.8\linewidth}{\raggedright #1}
}
\algnewcommand{\Initialize}[1]{%
  \State \textbf{Initialize:}
  \Statex \hspace*{\algorithmicindent}\parbox[t]{.8\linewidth}{\raggedright #1}
}
\def\BState{\State\hskip-\ALG@thistlm}
\title{MIND: Maximum Mutual Information Based Neural Decoder}
\author{\IEEEauthorblockN{Andrea M. Tonello and Nunzio A. Letizia} 
\thanks{The authors are with Universit\"{a}t Klagenfurt, Institute of Networked and Embedded Systems, Klagenfurt 9020, Austria, e-mail: andrea.tonello@aau.at, nunzio.letizia@aau.at.}
}
\begin{document}

\maketitle


\begin{abstract}
We are assisting at a growing interest in the development of learning architectures with application to digital communication systems. Herein, we consider the detection/decoding problem. We aim at developing an optimal neural architecture for such a task. The definition of the optimal criterion is a fundamental step. We propose to use the mutual information (MI) of the channel input-output signal pair, which yields to the minimization of the a-posteriori information of the transmitted codeword given the communication channel output observation. The computation of the a-posteriori information is a formidable task, and for the majority of channels it is unknown. Therefore, it has to be learned. For such an objective, we propose a novel neural estimator based on a discriminative formulation. This leads to the derivation of the mutual information neural decoder (MIND). The developed neural architecture is capable not only to solve the decoding problem in unknown channels, but also to return an estimate of the average MI achieved with the coding scheme, as well as the decoding error probability. Several numerical results are reported and compared with maximum a-posteriori and maximum likelihood decoding strategies.  
\end{abstract}

\begin{IEEEkeywords}
Channel decoding, statistical learning, mutual information, MAP, machine learning, neural networks.
\end{IEEEkeywords}

\section{Introduction}
\label{sec:introduction}
In digital communication systems, data detection and decoding are fundamental tasks implemented at the receiver. The maximum a-posteriori (MAP) decoding approach is the optimal one to minimize the symbol or sequence error probability \cite{Proakis2001,Bahl1974}. If the channel model is known, the MAP decoder can be realized according to known algorithms, e.g., BCJR algorithm \cite{Bahl1974}. For instance, for a linear time invariant channel (LTI) with additive Gaussian noise, the decoder comprises a first stage where the channel impulse response is estimated. Then, a max-log-MAP sequence estimator algorithm is implemented. The decoding metric is essentially related to the Euclidean distance between the received and hypothesized transmitted message filtered by the channel impulse response \cite{Proakis2001}. The task becomes more intriguing for channels that cannot be easily modeled or that are even unknown and for which only data samples are available. In such a case, a learning strategy turns out to be an attractive solution and it can be enabled by recent advances in machine learning for communications \cite{Oshea2017, Nachmani2018, Dorner2018}.

In this paper, we derive a neural architecture for optimal decoding in an unknown channel. We propose to exploit the mutual information (MI) of the input-output channel pair as decoding metric. The computation of the MI is a formidable task and it is unknown in many instances. Therefore, the MI has to be learned, which can be done in principle with neural architectures \cite{Mine2018, Wunder2019, Letizia2021, Song2020, LetiziaNIPS}. 
However, they are not directly deployable since in the problem at hand the decoder has a more specific task: it has to learn and minimize the a-posteriori information $-\log_2{(p_{X|Y}(\mathbf{x}|\mathbf{y}))}$ of the transmitted codeword $\mathbf{x}$ given the observed/received signal vector $\mathbf{y}$. The direct estimation of the a-posteriori information becomes then the objective. Such an estimation is possible by directly learning the conditional probability density function (pdf) $p_{X|Y}(\mathbf{x}|\mathbf{y})$ with a new estimator that exploits a discriminative model referred to as mutual information neural decoder (MIND). MIND allows not only to perform the decoding task, but also to return an estimate of the channel achievable rate and of the decoding error probability. 

This paper is organized as follows. Section \ref{sec:mid} presents the decoding approach principle. Section \ref{sec:ndec} describes the core idea about MIND. Section \ref{sec:implementation} discusses two practical neural architectures. Section \ref{sec:estimation} explains how MIND can be exploited to estimate both the achievable rate and the decoding error probability. Numerical results for different scenarios are reported in Section \ref{sec:results}. Finally, the conclusions are drawn.


\section{Mutual Information Decoding Principle}
\label{sec:mid}

The considered communication system comprises an encoder, a channel, and a decoder. The encoder maps a message of $k$ bits into one of the $M=2^k$ encoded messages of $n$ (not necessarily binary) symbols. The uncoded message is denoted with the vector $\mathbf{b}$, while the coded message with the vector $\mathbf{x}$ belonging to the set $\mathcal{A}_x$. Thus, the code rate is $R=k/n$ bits per channel use. The channel conveys the coded message to the receiver. Without loss of generality, we can write that the received message is $\mathbf{y}=H(\mathbf{x},\mathbf{h},\mathbf{n})$, where the channel transfer function $H$ implicitly models the channel internal state $\mathbf{h}$ including stochastic variables, e.g. noise, $\mathbf{n}$. For instance, for an LTI with additive noise, we obtain $\mathbf{y}=\mathbf{h}*\mathbf{x}+\mathbf{n}$, where $*$ denotes convolution.

The aim of the decoder is to retrieve the transmitted message so that to minimize the decoding error, or equivalently to maximize the received information for each transmitted message. Indeed, if we consider the mutual information of the pair $(\mathbf{y},\mathbf{x})$, with joint pdf $p_{XY}(\mathbf{x},\mathbf{y})$, and marginals $p_X(\mathbf{x})$ and $p_Y(\mathbf{y})$, this can be written as

\begin{equation}
i(\mathbf{y};\mathbf{x})=i(\mathbf{x})-i(\mathbf{x}|\mathbf{y})
\end{equation}
where $i(\mathbf{x}) = -\log_2 p_{X}(\mathbf{x})$ is the information of the transmitted message and $i(\mathbf{x}|\mathbf{y})$ is the a-posteriori information (residual uncertainty observing $\mathbf{y}$) \cite{Gallager1968}. It follows that for each transmitted message, the mutual information is maximized when the a-posteriori information is minimized. The a-posteriori information is given by
\begin{equation}
\label{eq:PI}
i(\mathbf{x}|\mathbf{y})=-\log_2 p_{X|Y}(\mathbf{x}|\mathbf{y}) = -\log_2{\frac{p_{XY}(\mathbf{x},\mathbf{y})}{p_{Y}(\mathbf{y})}}
\end{equation}
so that the decoding criterion becomes
\begin{equation}
\mathbf{\hat{x}}=\argmin_{\mathbf{x}\in \mathcal{A}_x}{i(\mathbf{x}|\mathbf{y})} = \argmax_{\mathbf{x}\in \mathcal{A}_x}{\log_2 p_{X|Y}(\mathbf{x}|\mathbf{y})}
\end{equation}
which corresponds to the log-MAP decoding principle \cite{Proakis2001,Bahl1974}.

To compute the a-posteriori information, we need to know the a-posteriori probability $p_{X|Y}(\mathbf{x}|\mathbf{y})$. If the channel transfer function can be modeled, the decoder is realized according to known approaches. In fact, the a-posteriori probability is explicitly calculated as $p_{X|Y}(\mathbf{x}|\mathbf{y})=p_{Y|X}(\mathbf{y}|\mathbf{x})p_{X}(\mathbf{x})/p_Y(\mathbf{y})$, i.e., from the conditional channel pdf, the coded message a-priori probability and the channel output pdf. However, for unknown channels and sources, we have to resort to a learning strategy as explained in the following.

\section{Discriminative Formulation of MIND}
\label{sec:ndec}

In the following, we show that it is possible to design a parametric discriminator whose output leads to the estimation of the a-posteriori information. The optimal discriminator is found by maximizing an appropriately defined value function.
Before describing the proposed solution, it should be noted that generative adversarial networks (GANs) \cite{Goodfellow2014} deploy a discriminator $D$ implemented by a neural network. It is used to distinguish if a sample is an original one coming from real data or a fake one generated by the generator $G$. The adversarial training pushes the discriminator output $D(\mathbf{a})$ towards the optimum value \cite{Goodfellow2014} $D^*(\mathbf{a}) = p_{data}(\mathbf{a})/(p_{data}(\mathbf{a})+p_{gen}(\mathbf{a}))$, 
where $p_{data}(\mathbf{a})$ is the real data pdf and $p_{gen}(\mathbf{a})$ is the pdf of the data generated by $G$.
The output of the optimal discriminator $D^*(\mathbf{a})$ can be used to estimate the pdf ratio 
$p_{gen}(\mathbf{a})/p_{data}(\mathbf{a}) = (1-D^*(\mathbf{a}))/D^*(\mathbf{a}).$

Now, the idea is to design a discriminator that estimates the probability density ratio $p_{XY}(\mathbf{x},\mathbf{y})/p_{Y}(\mathbf{y})$ and consequently the a-posteriori information as shown in \eqref{eq:PI}. 
The following Lemma provides an a-posteriori information estimator that exploits a discriminator trained to maximize a certain value function. At the equilibrium, a transformation of the discriminator output yields the searched density ratio.   

\begin{lemma}
\label{lemma:Lemma1}
Let $\mathbf{x}\sim p_X(\mathbf{x})$ and $\mathbf{y}\sim p_Y(\mathbf{y})$ be the channel input and output vectors, respectively. Let $H(\cdot)$ be the channel transfer function, in general stochastic, such that $\mathbf{y} = H(\mathbf{x}, \mathbf{h}, \mathbf{n} )$, with $\mathbf{h}$ and $\mathbf{n}$ being internal state and noise variables, respectively. 
Let the discriminator $D(\mathbf{x},\mathbf{y})$ be a scalar function of $(\mathbf{x},\mathbf{y})$.
If $\mathcal{J}_{MIND}(D)$ is the value function defined as
\begin{align}
\label{eq:discriminator_function}
\mathcal{J}_{MIND}(D) &= \; \mathbb{E}_{(\mathbf{x},\mathbf{y}) \sim p_{U}(\mathbf{x})p_{Y}(\mathbf{y})}\biggl[|\mathcal{T}_x|\log \biggl(D\bigl(\mathbf{x},\mathbf{y}\bigr)\biggr)\biggr] \nonumber \\ 
& + \mathbb{E}_{(\mathbf{x},\mathbf{y}) \sim p_{XY}(\mathbf{x},\mathbf{y})}\biggl[\log \biggl(1-D\bigl(\mathbf{x},\mathbf{y}\bigr)\biggr)\biggr],
\end{align}
where $p_U(\mathbf{x})=1/|\mathcal{T}_x|$ describes a multivariate uniform distribution over the support $\mathcal{T}_x$ of $p_X(\mathbf{x})$ having Lebesgue measure $|\mathcal{T}_x|$, then the optimal discriminator output is
\begin{equation}
\small
\label{eq:optimal_discriminator_1}
D^*(\mathbf{x},\mathbf{y}) = \arg \max_D \mathcal{J}_{MIND}(D) = \frac{p_{Y}(\mathbf{y})}{p_{Y}(\mathbf{y}) + p_{XY}(\mathbf{x},\mathbf{y})},
\end{equation}
and the a-posteriori information is computed as
\begin{equation}
\label{eq:i-DMIE}
i(\mathbf{x}|\mathbf{y}) = -\log_2 \biggl(\frac{1-D^*(\mathbf{x},\mathbf{y})}{D^*(\mathbf{x},\mathbf{y})}\biggr).
\end{equation}
\end{lemma}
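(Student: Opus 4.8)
The plan is to adapt the classical derivation of the optimal GAN discriminator \cite{Goodfellow2014} to the asymmetric value function \eqref{eq:discriminator_function}. First I would rewrite $\mathcal{J}_{MIND}(D)$ as a single double integral over $\mathbf{x}$ and $\mathbf{y}$. Pulling the constant $|\mathcal{T}_x|$ inside the first expectation and using $p_U(\mathbf{x}) = 1/|\mathcal{T}_x|$ on $\mathcal{T}_x$, the factor $|\mathcal{T}_x|\,p_U(\mathbf{x})$ collapses to $\mathbf{1}_{\mathcal{T}_x}(\mathbf{x})$; since the $\mathbf{x}$-marginal of $p_{XY}$ is $p_X$, which is supported on $\mathcal{T}_x$ as well, both terms are supported on the same set and
\[
\mathcal{J}_{MIND}(D) = \int_{\mathcal{T}_x}\!\!\int \bigl[ p_Y(\mathbf{y})\log D + p_{XY}(\mathbf{x},\mathbf{y})\log(1-D)\bigr]\diff\mathbf{y}\diff\mathbf{x},
\]
where $D = D(\mathbf{x},\mathbf{y})$.

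Next I would maximize the integrand pointwise. For a fixed pair $(\mathbf{x},\mathbf{y})$, setting $a = p_Y(\mathbf{y}) > 0$ and $b = p_{XY}(\mathbf{x},\mathbf{y}) \ge 0$, the map $t \mapsto a\log t + b\log(1-t)$ on $(0,1)$ is strictly concave (second derivative $-a/t^2 - b/(1-t)^2 < 0$), so its unique stationary point $t^* = a/(a+b)$ is the global maximizer. Since $D$ ranges over arbitrary scalar functions, assigning $D(\mathbf{x},\mathbf{y}) = a/(a+b)$ at every point maximizes the integrand everywhere, hence $\mathcal{J}_{MIND}$ itself, giving
\[
D^*(\mathbf{x},\mathbf{y}) = \frac{p_Y(\mathbf{y})}{p_Y(\mathbf{y}) + p_{XY}(\mathbf{x},\mathbf{y})},
\]
i.e.\ \eqref{eq:optimal_discriminator_1}; as usual, $D^*$ needs to be specified only where $p_Y$ or $p_{XY}$ is strictly positive. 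The estimator \eqref{eq:i-DMIE} then follows by pure algebra: $(1 - D^*)/D^* = p_{XY}(\mathbf{x},\mathbf{y})/p_Y(\mathbf{y}) = p_{X|Y}(\mathbf{x}|\mathbf{y})$, so $-\log_2\bigl((1-D^*)/D^*\bigr) = -\log_2 p_{X|Y}(\mathbf{x}|\mathbf{y}) = i(\mathbf{x}|\mathbf{y})$ by \eqref{eq:PI}.

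I do not anticipate a substantive difficulty. The only point requiring care is the bookkeeping around the uniform reference measure: checking that $|\mathcal{T}_x|\,p_U(\mathbf{x}) = \mathbf{1}_{\mathcal{T}_x}(\mathbf{x})$ and that the two expectations can legitimately be merged into one integral over $\mathcal{T}_x$ (which hinges on $\mathrm{supp}\,p_X = \mathcal{T}_x$, so that no mass of $p_{XY}$ is lost). Beyond that, one invokes the standard fact that pointwise maximization of a concave integrand, over an unconstrained function argument, produces the global maximizer; measurability and integrability of $\log D(\mathbf{x},\mathbf{y})$ and $\log(1-D(\mathbf{x},\mathbf{y}))$ are assumed implicitly, as is customary in this line of work.
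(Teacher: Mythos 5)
Your proposal is correct and follows essentially the same route as the paper's proof: rewrite $\mathcal{J}_{MIND}(D)$ as a Lebesgue integral over $\mathcal{T}_x\times\mathcal{T}_y$, maximize the concave integrand pointwise in $D$ (the paper does this via the first-order condition plus a non-positive second derivative, exactly your strict-concavity argument, with the simplification $|\mathcal{T}_x|\,p_U(\mathbf{x})=1$ on the support), and then recover $i(\mathbf{x}|\mathbf{y})$ from $(1-D^*)/D^* = p_{XY}/p_Y = p_{X|Y}$. Your extra care about the indicator $\mathbf{1}_{\mathcal{T}_x}$ and the support of $p_{XY}$ is a slightly more explicit bookkeeping of the same step, not a different argument.
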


\begin{proof}
From the hypothesis of the Lemma, the value function can be rewritten as
\begin{align}
\label{eq:Lebesgue1}
\mathcal{J}_{MIND}(D) &= \int_{\mathcal{T}_x} \int_{\mathcal{T}_y}\biggl[|\mathcal{T}_x|p_{U}(\mathbf{x})p_Y(\mathbf{y}) \log \biggl(D(\mathbf{x},\mathbf{y})\biggr) \nonumber \\ 
&+ p_{XY}(\mathbf{x},\mathbf{y}) \log \biggl(1-D(\mathbf{x},\mathbf{y})\biggr)\biggr] \diff \mathbf{x} \diff \mathbf{y}
\end{align}
where $\mathcal{T}_y$ is the support of $p_Y(\mathbf{y})$.
To maximize $\mathcal{J}_{MIND}(D)$, a necessary and sufficient condition requires to take the derivative of the integrand with respect to $D$ and equating it to $0$, yielding the following equation in $D$
\begin{equation}
\frac{|\mathcal{T}_x|p_{U}(\mathbf{x})p_Y(\mathbf{y})}{D(\mathbf{x},\mathbf{y})} -\frac{p_{XY}(\mathbf{x},\mathbf{y})}{1-D(\mathbf{x},\mathbf{y})} =0.
\end{equation}
The solution of the above equation yields the optimum discriminator $D^*(\mathbf{x},\mathbf{y})$ since: $p_U(\mathbf{x})=\frac{1}{|\mathcal{T}_x|}$ and $\mathcal{J}_{MIND}(D^*)$ is a maximum being the second derivative w.r.t. $D$ a non-positive function.

Finally, at the equilibrium, 
\begin{equation}
p_{X|Y}(\mathbf{x}|\mathbf{y})=\frac{p_{XY}(\mathbf{x},\mathbf{y})}{p_{Y}(\mathbf{y})} = \frac{1-D^*(\mathbf{x},\mathbf{y})}{D^*(\mathbf{x},\mathbf{y})},
\end{equation}
therefore the thesis follows.
\end{proof}

\section{Parametric (Neural Network) Implementation}
\label{sec:implementation}
The practical implementation of MIND is realized through a neural network-based learning approach. That is, the discriminator $D(\mathbf{x},\mathbf{y})$ with input $(\mathbf{x},\mathbf{y})$ is parameterized by a neural network. In a first phase, training of the discriminator is performed via optimization of \eqref{eq:discriminator_function}.
Training consists of transmitting repeatedly all coded messages such that at equilibrium the a-posteriori information $i(\mathbf{x}|\mathbf{y})$ is estimated. Then, in the testing phase, decoding can take place for each received new message $\mathbf{y}$ by minimizing the a-posteriori information. 
While the testing phase can be of moderate complexity depending on the neural network dimension, the training phase is more complex since no assumption on the channel and source statistics is made. In the following, we propose two different architectures and implementations of Lemma 1.

\subsection{Unsupervised Approach}

The value function in \eqref{eq:discriminator_function}, as it stands, requires the discriminator to be a scalar parameterized function which takes as input both the transmitted and received vector of samples $(\mathbf{x},\mathbf{y})$. The resulting architecture is presented in Fig.\ref{fig:unsupervised}. Such formulation is general and can be applied to any coding scheme. In fact, to learn the parameters of $D$, it is sufficient during training to alternate the input joint samples $(\mathbf{x},\mathbf{y})$ with marginal samples $(\mathbf{u},\mathbf{y})$, where $\mathbf{u}$ are realizations of a multivariate uniform distribution with independent components, defined over the support $\mathcal{T}_x$ of $p_X(\mathbf{x})$. Then, during the testing phase, decoding is accomplished by finding $\mathbf{x}$ that minimizes \eqref{eq:i-DMIE}, for all possible coded messages $\mathbf{x}$. This  method can be thought as an unsupervised learning approach since no known labels are exploited in the loss function for the training process. Hence, the objective is to implicitly estimate the density ratio in \eqref{eq:optimal_discriminator_1}.
Since in practical cases, the coded message $\mathbf{x}$ has discrete alphabet, another architecture is proposed below.

\begin{figure}[h]
	\centering
	\includegraphics[scale=0.43]{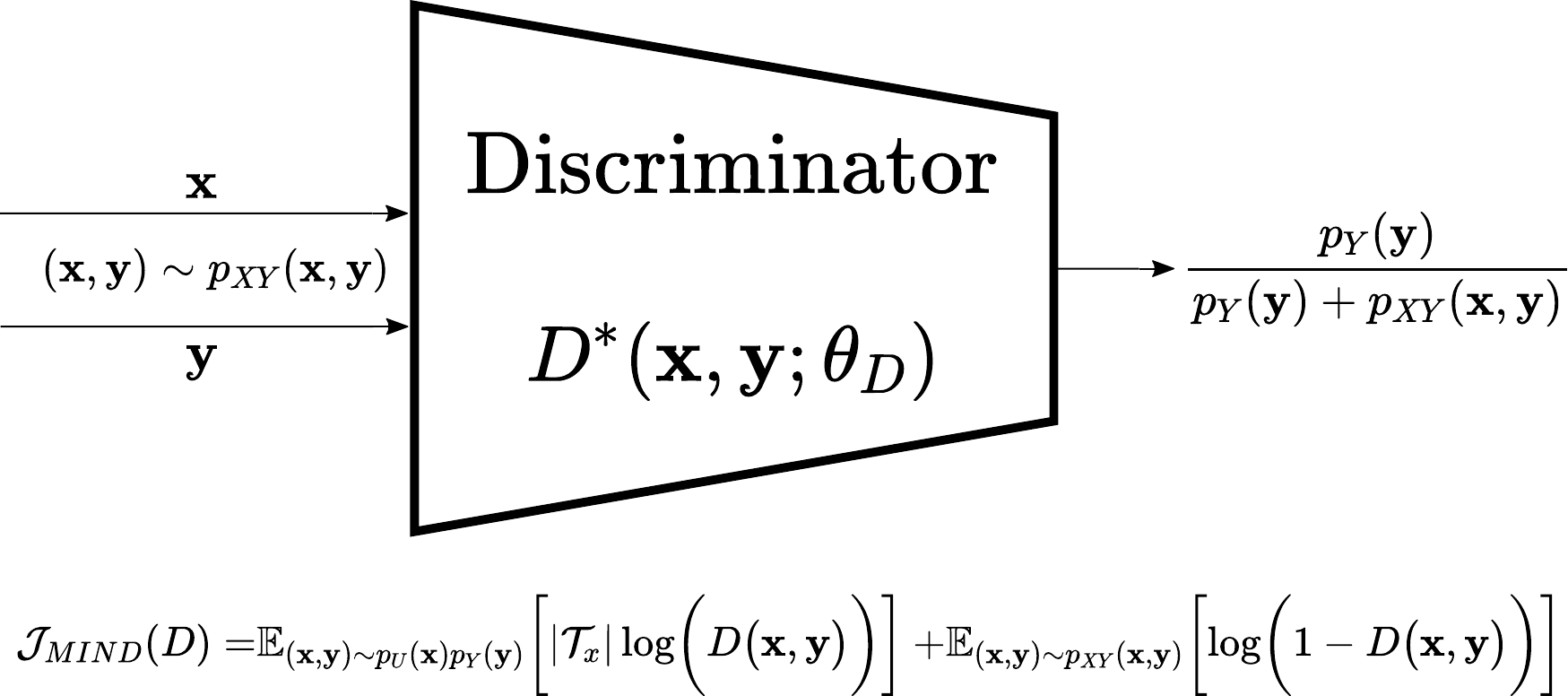}
	\caption{Unsupervised discriminator architecture and relative value function. Input is fed with paired samples from the joint distribution while the output consists of a single probability ratio value.}
	\label{fig:unsupervised}
\end{figure} 

\subsection{Supervised Approach}

It is possible to deploy a discriminator architecture that receives as input only the channel output $\mathbf{y}$. Indeed, if the channel input has discrete nature, e.g., $\mathbf{x}$ belongs to the alphabet $\mathcal{A}_x = \{\mathbf{x}_1, \mathbf{x}_2, \dots, \mathbf{x}_{M}\}$ with pdf $
    p_X(\mathbf{x})=\sum_{\mathbf{x}_i \in \mathcal{A}_x}{P_X(\mathbf{x}_i)\delta(\mathbf{x}-\mathbf{x}_i)},
$
then also the multivariate uniform pdf in Lemma 1 reads as $
    p_U(\mathbf{x})=\sum_{\mathbf{x}_i \in \mathcal{A}_x}{P_U(\mathbf{x}_i)\delta(\mathbf{x}-\mathbf{x}_i)},$
with $P_U(\mathbf{x}_i)=1/M$. Thus, the value function in \eqref{eq:discriminator_function}, or its Lebesgue integral expression in \eqref{eq:Lebesgue1}, can be decomposed as a sum of independent elements for every input codeword $\mathbf{x}_i$, for $i\in \{1,\dots,M\}$, as follows
\begin{equation}
\label{eq:j_i}
\mathcal{J}_{MIND}(D) = \sum_{i = 1}^{M}{\mathcal{J}_i(D)},
\end{equation}
where 
\begin{align}
\label{eq:discriminator_function_2}
\mathcal{J}_{i}(D) &= \; \mathbb{E}_{\mathbf{y} \sim p_{Y}(\mathbf{y})}\biggl[\log \biggl(D\bigl(\mathbf{x}_i,\mathbf{y}\bigr)\biggr)\biggr] \\ \nonumber
&+P_{X}(\mathbf{x}_i)\mathbb{E}_{\mathbf{y} \sim p_{Y|X}(\mathbf{y}|\mathbf{x}_i)}\biggl[\log \biggl(1-D\bigl(\mathbf{x}_i,\mathbf{y}\bigr)\biggr)\biggr].
\end{align}
Writing \eqref{eq:discriminator_function_2} with Lebesgue integrals and following the proof of Lemma 1, it is simple to verify that the discriminator which maximizes \eqref{eq:j_i} has to maximize all terms $\mathcal{J}_{i}(D)$, and this happens for 
\begin{equation}
\small
\label{eq:metric}
D^*(\mathbf{x}_i,\mathbf{y}) = \frac{p_{Y}(\mathbf{y})}{p_{Y}(\mathbf{y}) + P_{X}(\mathbf{x}_i)p_{Y|X}(\mathbf{y}|\mathbf{x}_i)} = \frac{1}{1 + P_{X|Y}(\mathbf{x}_i|\mathbf{y})},
\end{equation}
where $P_{X|Y}(\mathbf{x}_i|\mathbf{y})$ is the probability mass function of $X|Y$.
Now, it is more convenient from an implementation perspective to define an $M$-dimensional discriminator vector
\begin{equation}
    \bar{\mathbf{D}}(\mathbf{y}) := [D(\mathbf{x}_1,\mathbf{y}), \dots, D(\mathbf{x}_M,\mathbf{y})] = [D_1,\dots,D_M].
\end{equation}
In fact, the expectations in \eqref{eq:discriminator_function_2} can be rewritten as
\begin{align}
    &\mathcal{J}_{MIND}(D) =   \mathbb{E}_{\mathbf{y} \sim p_{Y}(\mathbf{y})}\biggl[\log \bigl(\bar{\mathbf{D}}(\mathbf{y}) \bigr)\cdot \bar{\mathbf{1}}_M^T\biggr] \\ \nonumber 
    &+\sum_{\mathbf{x}_i\in \mathcal{A}_x}{P_X(\mathbf{x}_i) \mathbb{E}_{\mathbf{y} \sim p_{Y|X}(\mathbf{y}|\mathbf{x}_i)}\biggl[\log \bigl(1-\bar{\mathbf{D}}(\mathbf{y})\bigr)\cdot \bar{\mathbf{1}}_M({\mathbf{x}_i})^T\biggr]},
\end{align}
where $\bar{\mathbf{1}}_M = [1,1,\dots,1]$ is a vector of $M$ unitary elements and $\bar{\mathbf{1}}_M({\mathbf{x}_i})$ is the one-hot code of $\mathbf{x}$ defined as
\begin{equation}
    \bar{\mathbf{1}}_M({\mathbf{x}_i}) := [0,\dots,0,\underbrace{1}_\text{i-th position},0,\dots,0].
\end{equation}
Finally, the supervised architecture (see Fig.\ref{fig:supervised}) value function assumes the following vector form 
\begin{align}
\label{eq:new_discriminator_function}
 &\mathcal{J}_{MIND}(D) =   \mathbb{E}_{\mathbf{y} \sim p_{Y}(\mathbf{y})}\biggl[\log \bigl(\bar{\mathbf{D}}(\mathbf{y}) \bigr)\cdot \bar{\mathbf{1}}_M^T\biggr] \\ \nonumber
 &+\mathbb{E}_{\mathbf{x} \sim p_{X}(\mathbf{x})} \mathbb{E}_{\mathbf{y} \sim p_{Y|X}(\mathbf{y}|\mathbf{x})}\biggl[\log \bigl(1-\bar{\mathbf{D}}(\mathbf{y})\bigr)\cdot \bar{\mathbf{1}}_M(\mathbf{x})^T\biggr].
\end{align}
The new expression \eqref{eq:new_discriminator_function} possesses the label information in the scalar product with the one-hot positional code and therefore can be treated as a supervised learning problem. 
\begin{figure}
	\centering
	\includegraphics[scale=0.43]{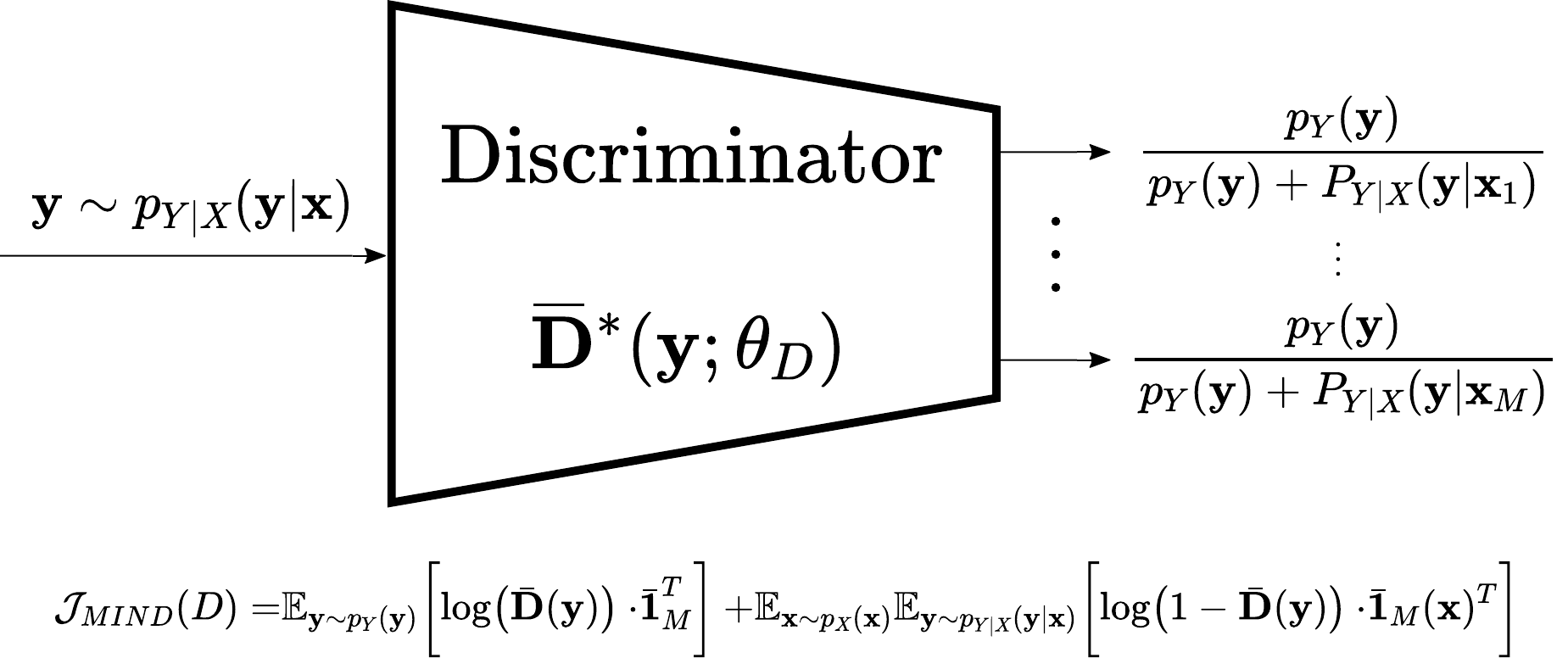}
	\caption{Supervised discriminator architecture and relative value function. Input is fed with the received channel samples while the output consists of multiple probability ratio values.}
	\label{fig:supervised}
\end{figure} 

\begin{figure*}
\begin{minipage}[t]{0.49\textwidth}
	\centering
  \includegraphics[scale=0.2]{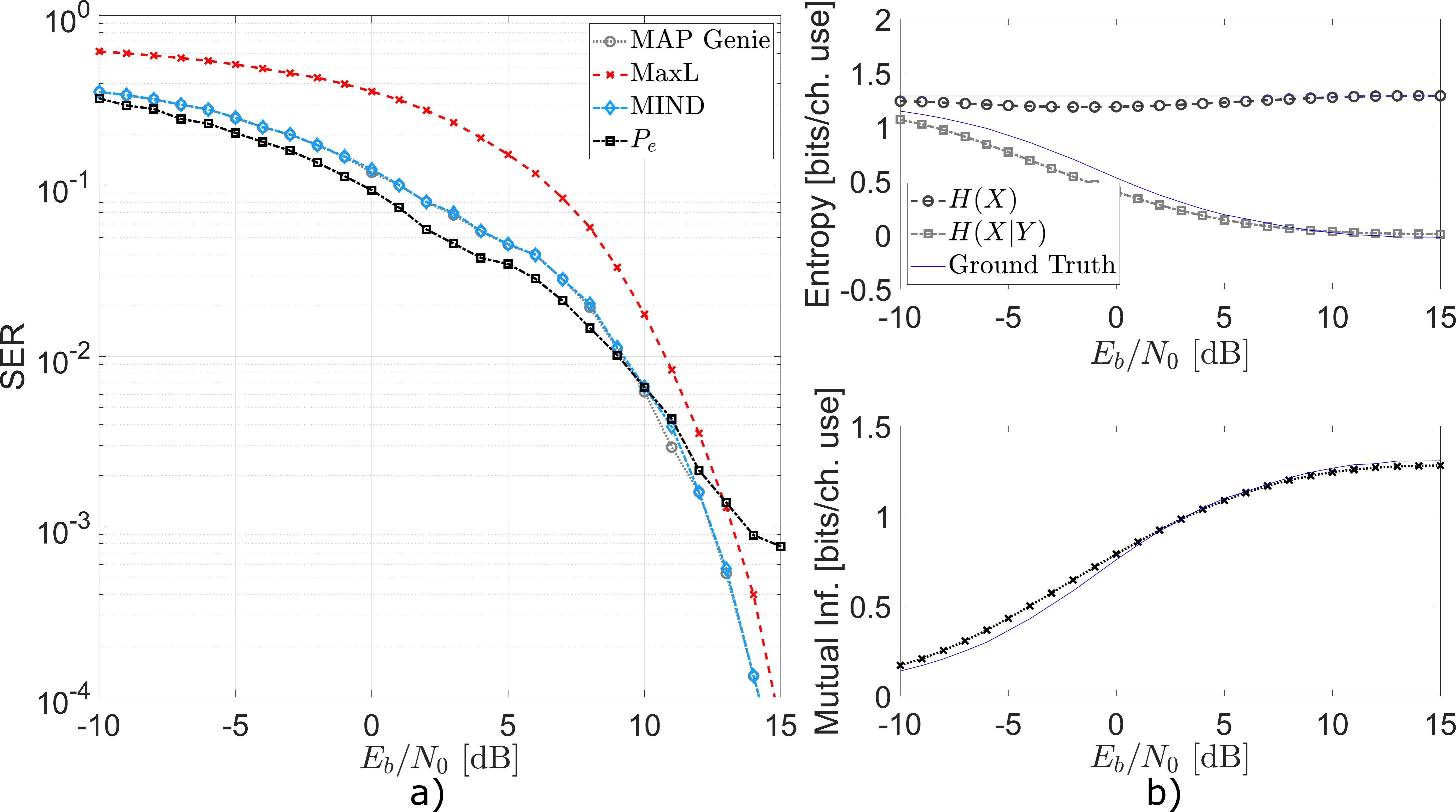}
  \caption{a) Symbol error rate for a 4-PAM modulation with non-uniform source distribution over an AWGN channel. Comparison among the optimal MAP decoder, MaxL decoder, MIND decoder and the estimated probability of error provided by MIND. b) Estimated source and conditional entropy (top) and estimated average mutual information (bottom) using MIND.}
	\label{fig:non-uniform_source}
\end{minipage}%
\hfill 
\begin{minipage}[t]{0.49\textwidth}
	\centering
   \includegraphics[scale=0.2]{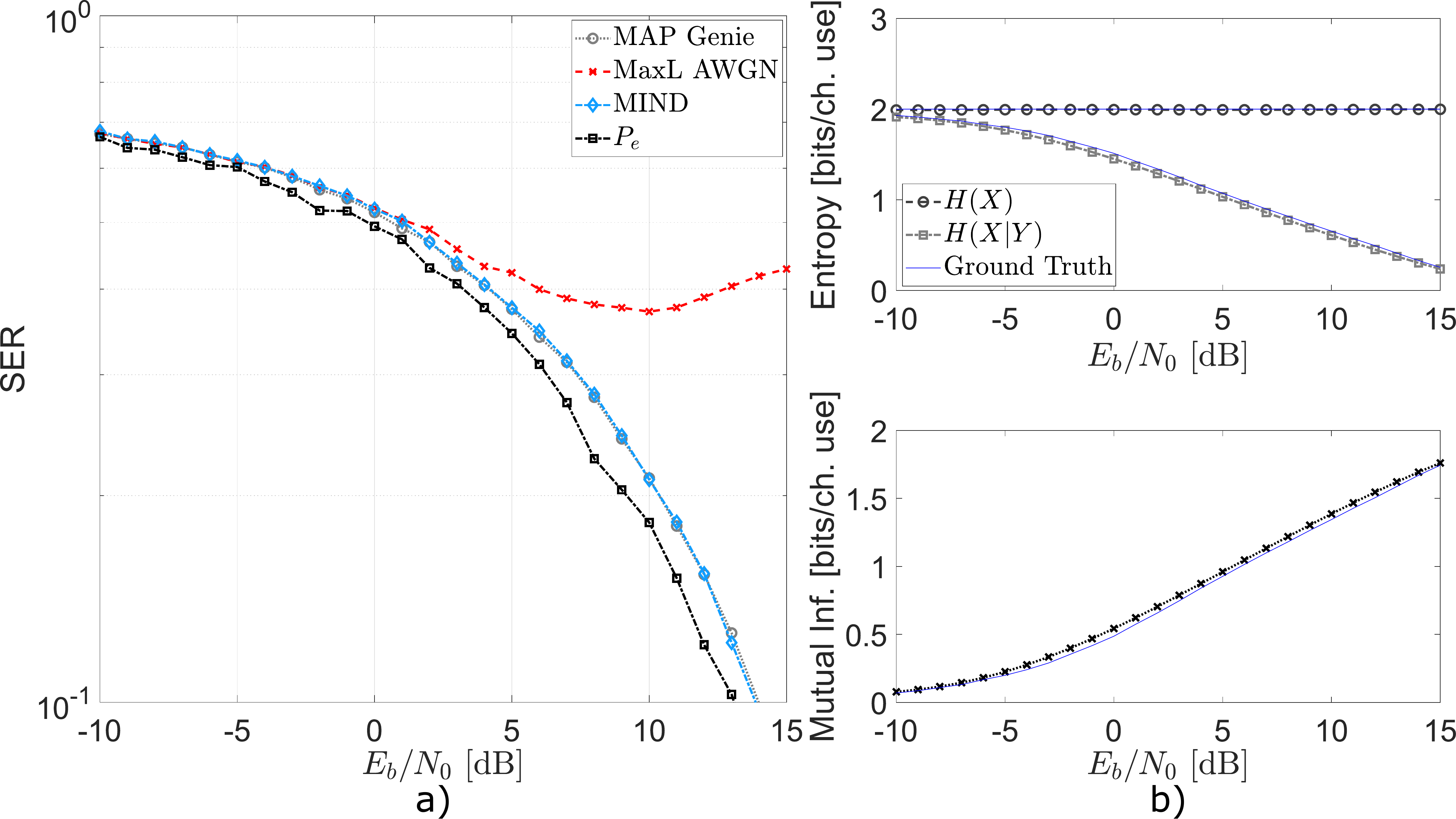}
  \caption{a) Symbol error rate for a 4-PAM modulation with uniform source distribution over a non-linear channel affected by additive Gaussian noise. Comparison among MaxL decoder with no CSI, MaxL decoder with perfect CSI, MIND decoder and the estimated probability of error. b) Estimated source and conditional entropy (top) and estimated average mutual information (bottom) using MIND.}
  \label{fig:non-linear_channel}
\end{minipage}%
\end{figure*}

By comparing the $M$ discriminator outputs $D^*_i = D^*(\mathbf{x}_i,\mathbf{y})$ in \eqref{eq:metric}, the final decoding stage implements
\begin{equation}
\mathbf{\hat{x}}_i = \argmax_{i \in \{1,...M\}} \log_2 \frac{1-D^*_i}{D^*_i} = \argmin_{\mathbf{x}_i \in \mathcal{A}_x} i(\mathbf{x}_i|\mathbf{y}),
\end{equation}
thus, MIND aims at finding the codeword $\mathbf{x}_i$ which minimizes the residual uncertainty $i(\mathbf{x}_i|\mathbf{y})$ after the observation of $\mathbf{y}$.

\section{Estimation of Achieved Information Rate and Decoding Error Probability}
\label{sec:estimation}
Dealing with an unknown channel, a relevant question is the estimation of the information rate achieved with the used coding scheme. MIND can be exploited for such a goal. In fact, the normalized average mutual information (in bits per channel use) is given by \cite{Gallager1968}
\begin{equation}
    I_n(X;Y) = \frac{1}{n}(H(X)-H(X|Y))
    \end{equation}
whose computation requires the entropy of the source $H(X)$ and the conditional entropy $H(X|Y)$:
\begin{equation}
\small
\label{eq:source_entropy}
H(X) = \mathbb{E}_{\mathbf{x} \sim p_X(\mathbf{x})}[i(\mathbf{x})] = -\sum_{\mathbf{x}_i \in \mathcal{A}_x}{P_X(\mathbf{x}_i)\log_2 P_X(\mathbf{x}_i)}
\end{equation}
\begin{equation}
\small
\label{eq:conditional_entropy}
H(X|Y) = -\mathbb{E}_{\mathbf{y} \sim p_Y(y)}\biggl[\sum_{\mathbf{x}_i \in \mathcal{A}_x}{P_{X|Y}(\mathbf{x}_i|\mathbf{y})\log_2 P_{X|Y}(\mathbf{x}_i|\mathbf{y})}\biggr].
\end{equation}
The discriminator in MIND, for a given coding scheme, returns an estimate of the a-posteriori probability mass function values $P_{X|Y}(\mathbf{x}_i|\mathbf{y})=(1-D_i)/D_i \; \forall i=\{1,\dots,M\}$, therefore \eqref{eq:conditional_entropy} can be directly computed. About \eqref{eq:source_entropy}, it is worth mentioning that the source distribution can be obtained using Monte Carlo integration from the a-posteriori probability obtained with MIND by averaging over $N$ realizations $\mathbf{y}_j$ of $Y$ as follows 
\begin{equation}
\small
\label{eq:monte-carlo_source}
P_X(\mathbf{x}_i) \stackrel{N\to \infty}{=} \frac{1}{N}\sum_{j=1}^{N}{P_{X|Y}(\mathbf{x}_i|\mathbf{y}_j)}.
\end{equation}
Similarly,
\begin{equation}
\small
H(X|Y) \stackrel{N\to \infty}{=} -\frac{1}{N} \sum_{j=1}^{N}{\sum_{\mathbf{x}_i \in \mathcal{A}_x}{P_{X|Y}(\mathbf{x}_i|\mathbf{y}_j)\log_2 P_{X|Y}(\mathbf{x}_i|\mathbf{y}_j)}}.
\end{equation}

%

MIND can also be used to estimate the decoding error probability $P_e$. Indeed, 
\begin{equation}
    P_e = 1-P[\mathbf{x}=\hat{\mathbf{x}}] = 1-\mathbb{E}_{\mathbf{y} \sim p_Y(y)}[P_{X|Y}(\mathbf{x}=\mathbf{\hat{x}}|\mathbf{y})],
\end{equation}
where the probability $P_{X|Y}(\mathbf{x}=\mathbf{\hat{x}}|\mathbf{y})$ comes from the decision criterion $\max_{\mathbf{x}_i} P_{X|Y}(\mathbf{x}_i|\mathbf{y})$ which is the instantaneous probability of correct decoding. Hence,
\begin{equation}
    P_e \stackrel{N\to \infty}{=} 1-\frac{1}{N} \sum_{j=1}^{N}{\max_{\mathbf{x}_i \in \mathcal{A}_x} P_{X|Y}(\mathbf{x}_i|\mathbf{y}_j)}.
\end{equation}

In the next section, we compare MIND to other well-known decoding criteria in scenarios for which an analytic solution is available. Furthermore, thanks to the network ability to estimate the a-posteriori probability, the average input-output mutual information is also reported.

\section{Numerical Results}
\label{sec:results}

The assessment of the proposed decoding approach is done by considering the following three representative scenarios: a) Uncoded transmission of symbols that are produced by a non-uniform source over an additive Gaussian noise channel; b) Uncoded transmission in a non-linear channel with additive Gaussian noise; c) Short block coded transmission in an additive Middleton noise channel.

Details about the architecture and hyper parameters of the neural networks are reported in the GitHub repository \cite{MIND_github}.

\subsection{Non-uniform Source}
To show the ability of the decoder to learn and exploit data dependence at the source, we firstly consider 4-PAM transmission with symbols $\mathbf{x}_i \in \{-3,-1,1,3\}$ and mass function given by a non-uniform source $p(\mathbf{x}) = [(1-P)/2, P/2, (1-P)/2, P/2]$, with $P=0.05$. It should be noted that data dependence at the source can be either created with a coding stage from a source that emits i.i.d. symbols, or intrinsically by the source of data traffic having certain statistics, as in this case. In Fig. \ref{fig:non-uniform_source}a the symbol error rate is shown. MIND is compared with the optimal MAP decoder that knows the a-priori distribution of the transmitted data symbols $p_X(\mathbf{x})$ and with the maximum likelihood (MaxL) decoder that only knows the Gaussian nature of the channel and assumes a source with uniform distribution. Moreover, the estimated $P_e$ is also shown. In Fig. \ref{fig:non-uniform_source}b, the source entropy, the conditional entropy and the average achieved mutual information, as estimated by MIND, are reported and compared with the ground truth numerically obtained. It is worth noticing that the average mutual information increases with the SNR, or alternatively, the residual uncertainty introduced by the channel decreases. In addition, the estimated source entropy tends to stabilize around its real value given by $-P\log_2(P/2)-(1-P)\log_2((1-P)/2)$.
\begin{figure*}[h]
	\centering
	\includegraphics[scale=0.25]{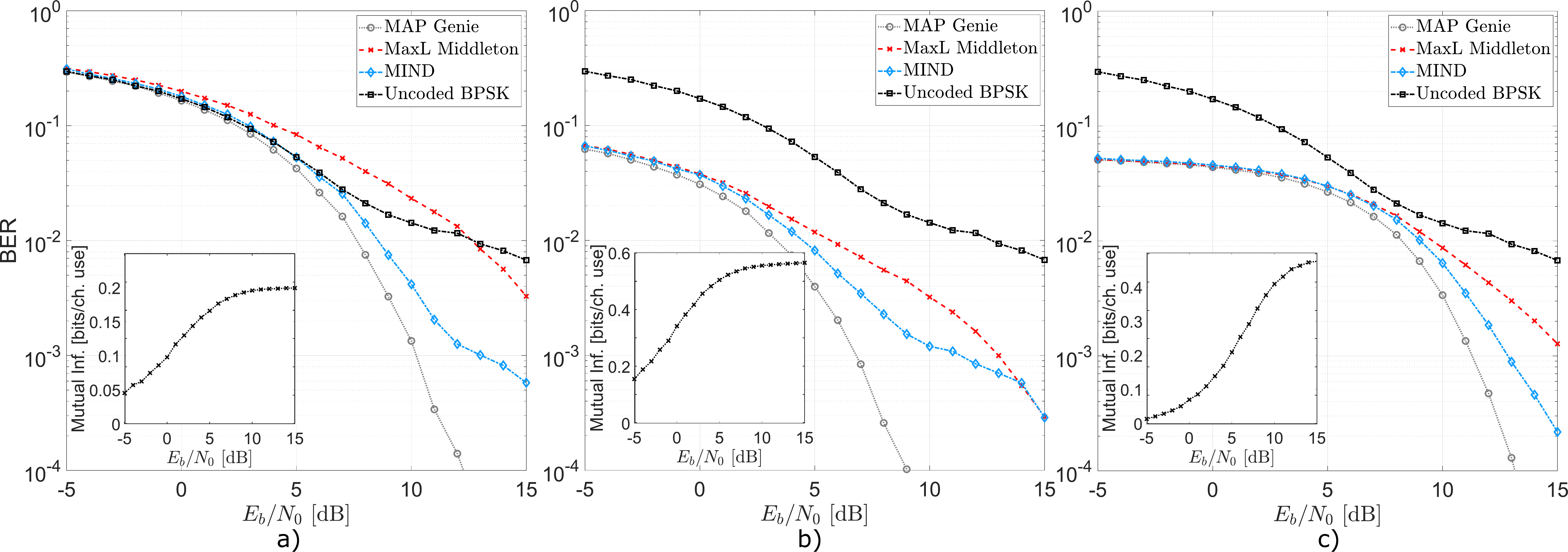}
	\caption{Bit error rate and mutual information of a short block coded transmission in an additive Middleton noise channel: a) Repetition code; b) Hamming code; c) Convolutional code.}
	\label{fig:middleton}
\end{figure*} 
\subsection{Non-linear Channel}
As a second example, we consider 4-PAM uncoded transmission with uniform source distribution over a non-linear channel with additive Gaussian noise. The objective of this experiment is to show the ability of MIND to discover such a channel non-linearity. In particular, the channel introduces a non-linearity (for instance because of the presence of non-linear amplifiers) modeled as $y_k=\text{sign}(x_k)\sqrt{|x_k|}+n_k$, where $k$ denotes the $k$-th time instant.
Fig.\ref{fig:non-linear_channel}a demonstrates how the MIND decoder manages to implicitly learn the non-linear channel model during the training phase and effectively use such information during decoding. A comparison with the MaxL decoder with and without perfect channel state information (CSI) knowledge is also conducted. Results show that MIND exhibits performance close to optimality. Fig. \ref{fig:non-linear_channel}b illustrates both the behaviour of the entropies and the average mutual information as function of the SNR.

\subsection{Additive Non-gaussian Noise Channel}
Lastly, and perhaps as most interesting example, we consider a short block coded transmission over an additive non-gaussian channel. The aim is to assess the ability of MIND to learn and exploit the presence of non-gaussian noise. In particular, we suppose that the noise follows a truncated Middleton distribution \cite{Middleton1977}, also called Bernoulli-Gaussian noise model, so that at any given time instant $k$ it is obtained as $n_k = (1-\epsilon_k)n_{1,k}+\epsilon_k n_{2,k}$ where $n_{1,k} \sim \mathcal{N}(0,\sigma_b^2)$ is a zero-mean Gaussian random variable with variance $\sigma_b^2$ and $n_{2,k} \sim \mathcal{N}(0,B\sigma_b^2)$ is also a zero-mean Gaussian random variable but with variance $B$ times larger. Instead, $\epsilon_k$ is a Bernoulli random variable with probability of success $P$. The pdf of the noise samples is then given by
$p_{N}(\mathbf{n}_k) = (1-P)\mathcal{N}(0,\sigma_b^2)+P\mathcal{N}(0,B\sigma_b^2).$
Assuming that the noise model is known and that BPSK symbols are transmitted, two decoding strategies can be devised. The first, denoted with MaxL Middleton, uses maximum likelihood decoding with the known conditional pdf $p(\mathbf{y}|\mathbf{x})=p_{N}(\mathbf{y}-\mathbf{x})$. The second strategy is a genie decoder that knows the outcome of the Bernoulli event for every time instance. That is, it knows whether a received sample is hit by Gaussian noise with variance $\sigma_b^2$ or $B\sigma_b^2$. A third decoding strategy is offered by MIND that learns the channel statistics.
We distinguish among three different types of codes: a) a binary repetition code with length $5$; b) a $(7,4)$ Hamming code; c) a rate $1/2$ convolutional code with memory $2$ and block-length $18$. For each of them, the bit error rate (BER) obtained with the genie, the MaxL Middleton and the MIND decoders is reported. The parameter $B$ was set to $5$ in all the experiments involving Middleton noise, wherein we also set $P=0.05$. 
Fig.\ref{fig:middleton} shows the gain provided by the neural-based decoder scheme over the classical maximum likelihood one that exploits the Middleton distribution. With MIND, the BER performance gets closer to the genie decoder. Fig.\ref{fig:middleton} also reports an estimate of the average mutual information provided by MIND.

\section{Conclusions}
\label{sec:conclusions}
In this paper, MIND, a neural decoder that uses the mutual information as decoding criterion, has been proposed. Two specific architectures have been described and they are capable of learning the a-posteriori information $i(\mathbf{x}|\mathbf{y})$ of the codeword $\mathbf{x}$ given the channel output observation $\mathbf{y}$ in unknown channels. Additionally, MIND allows the estimation of the achieved information rate with the used coding scheme as well as the decoding error probability. Several numerical results obtained in illustrative channels show that MIND can achieve the performance of the genie MAP decoder that perfectly knows the channel model and source distribution. It outperforms the conventional MaxL decoder that assumes the presence of Gaussian noise.        

\bibliographystyle{IEEEtran}
\bibliography{IEEEabrv,biblio}

\end{document}